\documentclass[smallabstract,smallcaptions]{dccpaper}
\usepackage[T1]{fontenc}
\usepackage[utf8]{inputenc}
\usepackage{bm}
\usepackage{amsmath}
\usepackage{amssymb,amsthm}
\usepackage{mathtools}
\usepackage{hyperref}
\usepackage[nameinlink,capitalise]{cleveref}
\usepackage{color}
\usepackage{url}

\newlength{\figurewidth}
\newlength{\smallfigurewidth}
\setlength{\smallfigurewidth}{2.75in}
\setlength{\figurewidth}{6in}

\newtheorem{theorem}{Theorem}
\newtheorem{definition}{Definition}
\newtheorem{corollary}{Corollary}
\newtheorem{lemma}{Lemma}

\usepackage{graphicx}
\graphicspath{{figures/}}

\usepackage[backend=biber, style=ieee, natbib=true, mincitenames=1, maxcitenames=2, giveninits=true, uniquename=init]{biblatex}
\AtBeginDocument{\toggletrue{blx@useprefix}}
\AtBeginBibliography{\togglefalse{blx@useprefix}}

\addbibresource{main.bib}

\usepackage{tikz}
\usetikzlibrary{backgrounds, calc, chains, fit, matrix, positioning, shadows, shapes, circuits.ee}
\tikzset{input/.style={}}
\tikzset{output/.style={}}
\tikzset{op/.style={circle, draw, thick, fill=black!10, minimum size=2.5ex, inner sep=0ex}}
\tikzset{filter/.style={rectangle, draw, thick, fill=black!10, minimum size=3.5ex, inner sep=1ex}}
\tikzset{nn/.style={trapezium, trapezium angle=80, draw, thick, fill=black!10, inner sep=1ex}}
\tikzset{branch/.style={circle, draw, thick, fill=black, minimum size=.5ex, inner sep=0ex}}
\tikzset{tensor/.style={rectangle, draw, thick, fill=white, minimum size=2em, double copy shadow={shadow xshift=.5ex,shadow yshift=-.5ex}}}
\tikzset{image/.style={rectangle, draw, thick, fill=white, minimum size=2em}}
\tikzset{>=direction ee}

\usepackage{pgfplots}
\pgfplotsset{compat=1.14}
\pgfplotsset{every axis/.append style={enlargelimits={abs=3pt},grid,axis lines=left}}
\pgfplotsset{every axis plot/.append style={thick,mark size=1.5pt,line join=bevel,mark options={solid}}}
\pgfplotsset{label style={font=\small}}
\pgfplotsset{tick label style={font=\footnotesize}}
\pgfplotsset{grid style={color=black!10}}
\pgfplotsset{legend style={draw=none,opacity=.85,font=\footnotesize,cells={anchor=west,opacity=1}}}
\pgfplotsset{every non boxed x axis/.style={xtick align=center,shorten <=-.5\pgflinewidth}}
\pgfplotsset{every non boxed y axis/.style={ytick align=center,shorten <=-.5\pgflinewidth}}
\pgfplotsset{every non boxed z axis/.style={ztick align=center,shorten <=-.5\pgflinewidth}}
\pgfplotsset{/pgf/number format/1000 sep={\,}}

\definecolor{gblue}{HTML}{1f77b4}
\definecolor{ggreen}{HTML}{2ca02c}

\DeclareMathOperator{\lce}{LCE}
\DeclareMathOperator{\unif}{unif}
\DeclareMathOperator{\atan}{tan^{-1}}

\DeclarePairedDelimiterX{\divergence}[2]{[}{]}{#1\;\delimsize\|\;#2}

\newcommand{\1}{\ensuremath{\mathbf{1}}}

\newcommand{\N}{\operatorname{\mathbb N}}

\renewcommand{\P}{\operatorname{Pr}}

\newcommand\comment[1]{}
\newcommand{\Ltwo}{\ensuremath{L^2[0,1]}}
\newcommand{\conv}{\star}

\setcounter{topnumber}{20}
\setcounter{bottomnumber}{20}
\setcounter{totalnumber}{20}







\newcommand*{\JOURNAL}{}%

\begin{document}

\title{Neural Networks Optimally Compress the Sawbridge}

\author{%
Aaron B. Wagner$^{\ast}$ and Johannes Ballé$^{\dag}$\\[0.5em]
{\small\begin{minipage}{\linewidth}\begin{center}
\begin{tabular}{ccc}
$^{\ast}$Cornell University & \hspace*{0.5in} & $^{\dag}$Google Research \\
388 Frank H.T.~Rhodes Hall && 1600 Amphitheatre Pkwy \\
Ithaca, NY 14853 USA && Mountain View, CA 94043 USA\\
\url{wagner@cornell.edu} && \url{jballe@google.com}
\end{tabular}
\end{center}\end{minipage}}
}

\maketitle

\begin{abstract}
    Neural-network-based compressors have proven to be remarkably effective
at compressing sources, such as images, that are nominally 
high-dimensional but presumed to be concentrated on a low-dimensional 
manifold. We consider a continuous-time random process
that models an extreme version of such a source,
wherein the realizations fall along a one-dimensional ``curve''
in function space that has infinite-dimensional linear span. We
precisely characterize the optimal entropy-distortion tradeoff
for this source and show numerically that it is achieved by
neural-network-based compressors trained via stochastic gradient
descent. In contrast, we show both analytically and experimentally
that compressors based on the classical Karhunen-Lo\`{e}ve transform 
are highly suboptimal at high rates.

\end{abstract}
\vspace{1.5em}


\section{Introduction}

Artificial Neural-Network (ANN)-based compressors have recently achieved 
notable successes on the task of lossy compression of multimedia, spanning
an array of sources and in some cases outperforming compressors
that have been extensively optimized~(see, e.g.,~\cite{BaChMiSiJo20} 
and the references therein). 

One explanation for the exemplary performance of ANNs is that it derives 
from their ability to approximate arbitrary functions 
(e.g.,~\cite{LeLiPiSc93}), which in turn enables them to perform 
nonlinear dimensionality reduction~\citep{HiSa06}. To see this, first 
consider classical rate--distortion theory for Gaussian sources,
which is based on linear dimensionality reduction~\cite[Sec.~4.5.2]{Berger:RD}. 
Specifically, one projects the source realization onto an orthogonal 
family of reconstructions obtained from the Karhunen-Lo\`{e}ve
Transform (KLT) of the source. One then quantizes the resulting 
coefficients, say with a uniform quantizer followed by entropy 
coding~\cite[Sec.~5.5]{Pearlman:Said}.
At the decoder, the inverse transform is applied to the quantized
coefficients. The size of the orthogonal family is generally less than
the dimensionality of the source, which provides some amount of compression. 
The quantization process provides more. For Gaussian sources, this 
architecture is provably near-optimal at high 
rates~\cite[Sec.~5.6.2]{Pearlman:Said}. 
In particular, using a nonlinear transform in place of the KLT
provides essentially no benefit. 

For real-world multimedia sources, however,
there is reason to believe that allowing for nonlinear transforms 
would be advantageous.  The distribution
of natural images is widely suspected to be supported by a 
low-dimensional manifold in pixel-space~(e.g., \cite{HeBaRaSi15}),
for instance.
That is, while the linear span of the manifold may be high, 
there exists a continuous, presumably nonlinear, map with
a continuous, presumably nonlinear, inverse, from
the manifold to a low-dimensional Euclidean space. One
could in principle use such a map in place of the linear
projections in the classical architecture, with the reduced
dimensionality of the output afforded by allowing for nonlinear
transforms translating to a lower bit-rate.
State-of-the-art ANN-based compressors indeed follow this 
architecture, with nonlinear analysis and synthesis transforms 
surrounding a conventional uniform quantizer with entropy 
coding~\citep{BaChMiSiJo20}. 
As a consequence, one might hypothesize that ANNs are particularly
adept at compressing sources for which there is a large discrepancy
between the amount of dimensionality reduction afforded by linear
versus nonlinear transforms.

We provide evidence for this hypothesis by showing that
ANNs optimally compress a prototypical source of this type,
and that their performance beats the linear-transform-based approach
by a large margin.
We consider a particular stochastic process over $[0,1]$ that can be 
constructed from a continuous, nonlinear transformation of a single random
variable.
We learned of this process and its usefulness as a test-case for
compression algorithms from the survey by
Donoho~\emph{et al.}~\citep{DoVeDeDa98}, who in turn credit 
Meyer~\citep{Meyer:Ramp}.
Donoho~\emph{et al.}\ refer to this process as the \emph{Ramp}; we shall 
call it the \emph{sawbridge.} We focus on this process because it exposes
the largest possible gap between linear and nonlinear dimensionality
reduction. Donoho~\emph{et al.}\ point out that the sawbridge has the
same autocorrelation, and thus the same KLT,
as the Brownian bridge.\footnote{The two processes
also share the property that they start and end at zero, which
motivates our choice of the former's name.} In particular, the
KLT of the sawbridge is infinite-dimensional.  We show that 
any linear transform requires infinitely many components to 
recover the source. In contrast, there is a simple nonlinear 
transform that can recover the source realization from a 
one-dimensional projection.

We show that this discrepancy in dimensionality reduction
translates to a large gap in compression performance.
We analytically characterize the performance of optimal one-shot
compression for the sawbridge and numerically 
show that it is realized by a deep ANN
trained via stochastic gradient descent. We also characterize
the performance of KLT-based schemes and show, both numerically
and mathematically, that they are exponentially suboptimal.

The next section introduces the sawbridge process and its properties.
\cref{sec:optimal} shows how to optimally compress the sawbridge.
\cref{sec:linear} characterizes the performance of schemes based
on the KLT and entropy coding. \cref{sec:neural} numerically shows 
that the sawbridge 
is optimally compressed by existing neural network architectures
and training methods, and that linear-based methods are highly
suboptimal. All proofs have been omitted due to space
constraints but are available in the extended version of the
paper~\cite{WaBa:Sawbridge:Long}.

\section{The Sawbridge}
The focus of this paper is the following stochastic process.

\begin{figure}
\centering
\includegraphics{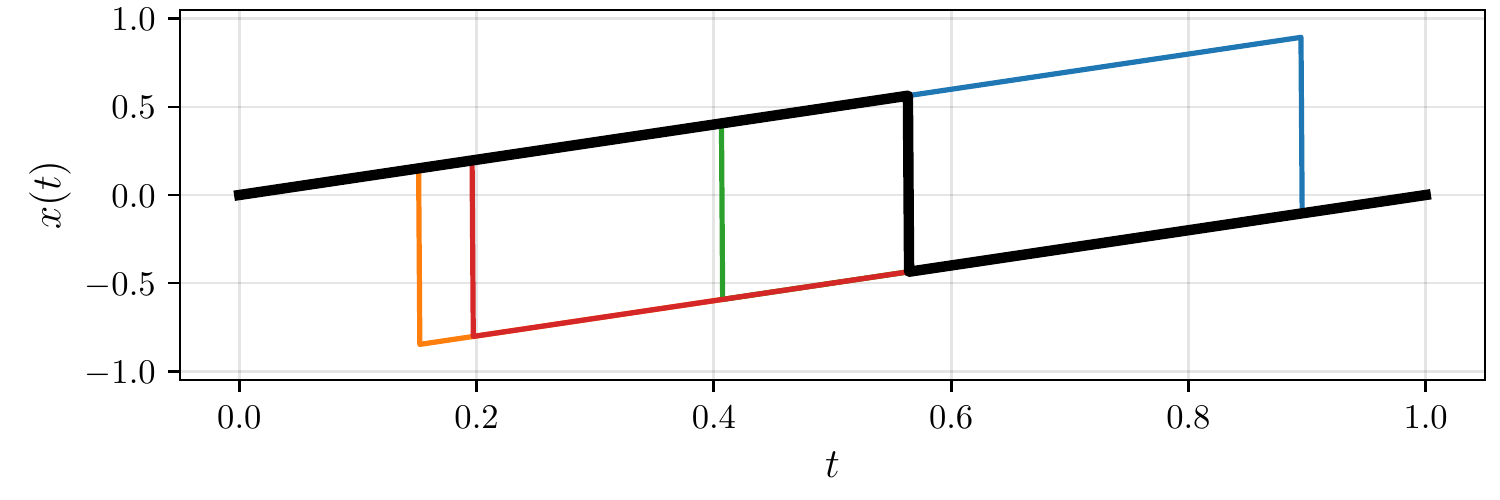}
\caption{Realizations of the sawbridge. The bold line represents one full realization; others show additional samples.}
\label{fig:realizations}
\end{figure}

\begin{definition}[cf.~\cite{DoVeDeDa98}]
    The \emph{sawbridge} is the process
    \begin{equation}
    X(t) = t - \1(t \ge U) \quad t \in [0,1],
        \label{eq:sawbridge}
    \end{equation}
    where $U$ is uniformly distributed over $[0,1]$. We use $X$
    or $X(\cdot)$ to refer to the entire process $\{X(t)\}_{t = 0}^1$.
\end{definition}

See \cref{fig:realizations} for sample realizations. In words, the process
jumps from the ``rail'' $t$ to the ``rail'' $t-1$
at the random time $U$. Alternatively, one can view
$-X$ as the centered empirical cumulative distribution function
of a single $\unif[0,1]$ random variable.

As noted in the introduction,
we are interested in the sawbridge because it exposes the largest
possible gap between the performance of linear and nonlinear
dimensionality reduction. Call a map $f : \Ltwo \mapsto \mathbb{R}^k$ a
\emph{transform (of dimension $k$)} if it is continuous and there exists
a continuous function $g : \mathbb{R}^k \mapsto \Ltwo$  (the
\emph{inverse transform}) such that
\begin{equation}
    \label{eq:lossless}
    g(f(X)) = X  \quad \text{a.s.}
\end{equation}
We call $\mathbb{R}^k$ the \emph{latent space}.

If $f$ and, especially, $g$ are permitted to be nonlinear,
then a transform of dimension one exists, which is clearly
the lowest possible.  Specifically, the choices
\begin{equation}
    f(x) = \int_{0}^1 x(t) \; dt
    \label{eq:optlinearf}
\end{equation}
and
\begin{equation}
    (g(y))(t) = t - \1(y \le t - 1/2)
\end{equation}
suffice. This comports with the intuition
that the process is completely described by $U$. In fact,
the $f(\cdot)$ in (\ref{eq:optlinearf}) satisfies $f(X) = U - 1/2$ a.s.
if $X$ and $U$ are related by~(\ref{eq:sawbridge}).
Note that this $f$ happens to be a linear map.

On the other hand, if $f$ and $g$ are both required to be linear
maps then~(\ref{eq:lossless}) is impossible for any finite $k$.
This follows from the following result on the
Karhunen-Lo\`{e}ve expansion of the process.
Note that the sawbridge is zero mean and let
\begin{equation}
    K(s,t) = E[X(s) X(t)] = \min(s,t) - st
\end{equation}
denote its autocorrelation.

\begin{theorem}
    The functions
    \begin{equation}
        \phi_k(t) = \sqrt{2} \cdot \sin(\pi k t) \quad t \in [0,1], \quad
                      k = 1, 2, \ldots,
    \end{equation}
    form an orthonormal basis for $\Ltwo$. They are eigenfunctions
    of $K(\cdot,\cdot)$ with corresponding eigenvalues
    \ifdefined\JOURNAL
    \begin{equation}
        \lambda_k = \frac{1}{\pi^2 k^2},
    \end{equation}
    \else
    $\lambda_k = \frac{1}{\pi^2 k^2}$,
    \fi
    meaning that
    \begin{equation}
        \label{eq:eigenfunction}
        \int_{0}^1 K(s,t) \phi_k(t) \; dt = \lambda_k \cdot \phi_k(s)
    \end{equation}
    for all $k$ and $s \in [0,1]$. If we define
    \begin{align}
           \label{eq:KLTcoeffdef}
        Y_k & := \int_0^1 X(t) \phi_k(t) \; dt \\
           & = - \sqrt{2 \lambda_k} \cos(\pi k U),
           \label{eq:explicitKLTcoeff}
    \end{align}
    then $\{Y_{k}/\sqrt{\lambda_{k}}\}_{k = 1}^\infty$ is a
       sequence of uncorrelated, zero-mean,
      unit-variance random variables such that the sawbridge
    can be written as
    \begin{equation}
        \label{eq:invKLT}
        X(t) = \sum_{k = 1}^\infty Y_{k} \phi_k(t),
    \end{equation}
    \ifdefined\JOURNAL
   in the sense that 
    \begin{equation}
        \lim_{n \rightarrow \infty} \sup_{0 \le t \le 1} E\left[
               \left(X(t) - \sum_{k = 1}^n Y_k \phi_k(t)\right)^2\right]
                   \; dt = 0.
    \end{equation}
    \else
    where the convergence in \cref{eq:invKLT} is in quadratic mean, uniformly
    over $t$.
    \fi
\end{theorem}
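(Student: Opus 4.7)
The plan is to verify each assertion in turn, beginning with the most elementary. First, the claim that $\{\sqrt{2}\sin(\pi k t)\}_{k\geq 1}$ is an orthonormal basis for $\Ltwo$ is classical; I would either cite it directly as the family of Dirichlet eigenfunctions of $-d^2/dt^2$ on $[0,1]$, or derive it by odd-reflecting functions in $\Ltwo$ to $[-1,1]$ and invoking the usual complex Fourier basis on that interval. Orthonormality itself can be read off from the product-to-sum identity $\sin A\sin B=\tfrac12(\cos(A-B)-\cos(A+B))$.

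For the eigenfunction property, I would substitute $K(s,t)=\min(s,t)-st$ into $\int_0^1 K(s,t)\phi_k(t)\,dt$, split the integral at $t=s$, perform a single integration by parts on each piece, and observe that the boundary terms either vanish (thanks to $\sin(\pi k\cdot 0)=\sin(\pi k\cdot 1)=0$) or telescope against the $-st$ contribution. Collecting terms yields $\lambda_k\phi_k(s)$ with $\lambda_k=1/(\pi^2 k^2)$; this matches the well-known KLT of the Brownian bridge, as expected since both processes share the same autocorrelation.

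The closed form \eqref{eq:explicitKLTcoeff} for $Y_k$ follows by plugging \eqref{eq:sawbridge} into \eqref{eq:KLTcoeffdef}:
\[
Y_k=\sqrt{2}\int_0^1 t\sin(\pi k t)\,dt-\sqrt{2}\int_U^1\sin(\pi k t)\,dt.
\]
Integration by parts on the first integral produces a boundary term $-(-1)^k/(\pi k)$, and the second integral evaluates to $-(-1)^k/(\pi k)+\cos(\pi k U)/(\pi k)$; the $(-1)^k$ terms cancel, leaving $Y_k=-\sqrt{2}\cos(\pi k U)/(\pi k)=-\sqrt{2\lambda_k}\cos(\pi k U)$. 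From this, $E[Y_k]=0$ and $E[Y_k^2]=\lambda_k$ follow from $\int_0^1\cos(\pi k u)\,du=0$ and $\int_0^1\cos^2(\pi k u)\,du=1/2$, and for $j\neq k$ the product-to-sum identity reduces $E[Y_jY_k]$ to integrals of $\cos(\pi(j\pm k)u)$ over $[0,1]$, both of which vanish since $j\pm k$ are nonzero integers.

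The convergence statement is the only part that is not pure calculation. Using the eigenfunction identity I would compute $E[X(t)Y_k]=\int_0^1 K(t,s)\phi_k(s)\,ds=\lambda_k\phi_k(t)$ and the uncorrelatedness already established to get
\[
E\!\left[\Bigl(X(t)-\sum_{k=1}^n Y_k\phi_k(t)\Bigr)^{\!2}\right]=K(t,t)-\sum_{k=1}^n\lambda_k\phi_k(t)^2.
\]
Because $K$ is continuous, symmetric, and positive semidefinite on $[0,1]^2$, Mercer's theorem gives $K(s,t)=\sum_k\lambda_k\phi_k(s)\phi_k(t)$ with absolute and uniform convergence on $[0,1]^2$; specializing to $s=t$ shows the residual converges to $0$ uniformly in $t$. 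The main obstacle is this last step: the arithmetic parts are routine, but securing \emph{uniform} (rather than merely pointwise or $L^2$) convergence relies on invoking Mercer's theorem with its continuity hypothesis, which is the only non-elementary input to the argument.
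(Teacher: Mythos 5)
Your proposal is correct and follows essentially the same route as the paper's proof: the odd-reflection argument for Fourier completeness, direct integration by parts for the eigenfunction identity and for the closed form of $Y_k$, and Mercer's theorem (which underlies the Karhunen--Lo\`{e}ve theorem the paper cites) for uniform quadratic-mean convergence. The only difference is cosmetic: you unfold the K--L citation into its Mercer-based derivation and verify the moment claims directly from the explicit $Y_k$, whereas the paper obtains them by appealing to the K--L theorem itself.
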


\ifdefined\JOURNAL
\begin{proof}
    It is straightforward to verify that $\{\phi_k(\cdot)\}_{k = 1}^\infty$
    forms an orthonormal family satisfying~(\ref{eq:eigenfunction}).
      To show that it
    forms an orthonormal  
       basis for $\Ltwo$, consider any $g(\cdot)$ in $\Ltwo$
    such that
    \begin{equation}
        \label{eq:basishypothesis}
        \int_0^1 g(t) \sin(\pi k t) \; dt = 0 \ \text{for all $k$.}
    \end{equation}
    Consider the odd extension of $g(\cdot)$ to $[-1,1]$,
    \begin{equation}
        g_{o}(t) = \begin{cases}
            - g(-t) & \text{if $t \in [-1,0]$} \\
            g(t) & \text{if $t \in (0,1]$}.
        \end{cases}
    \end{equation}
    From (\ref{eq:basishypothesis}), we have
    \begin{equation}
        \int_{-1}^1 g_{o}(t) \sin(\pi k t) \; dt = 0 \quad k = 1,2,3,\ldots.
    \end{equation}
    At the same time, since $g_{o}(\cdot)$ is odd, we have
    \begin{equation}
        \int_{-1}^1 g_{o}(t) \cos(\pi k t) \; dt = 0, \quad k = 0,1,2,\ldots.
    \end{equation}
    The completeness of the Fourier basis implies
    that $g(\cdot) \equiv 0$
    and hence $\{\phi_k(\cdot)\}_{k = 1}^\infty$ is also an orthonormal
    basis~(e.g.,~\cite[Theorem~8.16]{Rudin:RealAnalysis},
       \cite[Theorem 5.27]{Folland:RealAnalysis}; note that the 
       set of Riemann integrable functions is dense in 
       $L^2$~\cite[Prop.~6.7]{Folland:RealAnalysis}).
    The rest of the conclusion then follows from the Karhunen-Lo\`{e}ve
    theorem~(e.g.,~\cite[Prop.~4.1]{Wong:SPES}; note that the 
    sawbridge is quadratic-mean continuous, even though its
    realizations are obviously not continuous),
     except for (\ref{eq:explicitKLTcoeff}), which follows
    by integrating (\ref{eq:KLTcoeffdef}) by parts.
\end{proof}
\else
\fi

This implies that if the transform and inverse must be linear,
then an infinite number of dimensions is required to represent the 
sawbridge, as shown in the following corollary.

\begin{corollary}
    For any finite $k$, if $f : \Ltwo \mapsto \mathbb{R}^k$ and
      $g : \mathbb{R}^k
       \mapsto \Ltwo$ are linear maps 
         then they cannot satisfy~(\ref{eq:lossless}).
\end{corollary}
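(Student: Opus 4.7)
The plan is to exploit the fact that a linear composition $g \circ f$ has image contained in a finite-dimensional subspace of $\Ltwo$, whereas the Karhunen-Loève expansion shows that the sawbridge has nontrivial energy in every direction $\phi_k$ and thus cannot be concentrated on any such subspace.

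First, I would observe that since $f$ and $g$ are (continuous) linear maps, the composition $g \circ f \colon \Ltwo \to \Ltwo$ is linear, and its range $V := g(\mathbb{R}^k) \subset \Ltwo$ is a linear subspace of dimension at most $k$. The hypothesis $g(f(X)) = X$ a.s.\ therefore forces $X \in V$ with probability one, which in turn implies $\langle X, h \rangle = 0$ a.s.\ for every $h \in \Ltwo$ orthogonal to $V$.

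Second, I would produce a suitable $h$ to derive a contradiction. Since $V$ has dimension at most $k$, the $(k+1)$-dimensional subspace $\mathrm{span}\{\phi_1,\dots,\phi_{k+1}\}$ cannot be contained in $V$, so there exists a nonzero vector of coefficients $(a_1,\dots,a_{k+1})$ such that
\begin{equation}
h := \sum_{j=1}^{k+1} a_j \phi_j
\end{equation}
is orthogonal to $V$. Using the orthonormality of $\{\phi_k\}$ and the definition~(\ref{eq:KLTcoeffdef}), I would compute
\begin{equation}
\langle X, h \rangle = \sum_{j=1}^{k+1} a_j Y_j.
\end{equation}
This random variable must equal $0$ a.s., but the theorem tells us that $\{Y_j/\sqrt{\lambda_j}\}$ are uncorrelated with unit variance, so
\begin{equation}
\Var\!\left(\sum_{j=1}^{k+1} a_j Y_j\right) = \sum_{j=1}^{k+1} a_j^2 \lambda_j > 0,
\end{equation}
since not all $a_j$ vanish and each $\lambda_j = 1/(\pi^2 j^2) > 0$. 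This contradicts $\langle X, h \rangle = 0$ a.s., completing the argument.

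I do not anticipate a serious obstacle; the only point requiring care is the justification that $X$ lies in $V$ almost surely, which uses continuity of $g$ together with $g(\mathbb{R}^k)$ being a finite-dimensional (hence closed) subspace of $\Ltwo$. Everything else reduces to basic linear algebra in $\Ltwo$ combined with the KLT structure from the previous theorem.
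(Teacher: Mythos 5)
Your proof is correct and takes essentially the same approach as the paper: both observe that $g(f(X))$ must lie in a subspace of $\Ltwo$ of dimension at most $k$, while the Karhunen--Lo\`{e}ve expansion from the preceding theorem shows $X$ has positive variance along infinitely many orthonormal directions $\phi_j$. The paper phrases this as a comparison of the number of nonzero eigenvalues of the two autocorrelations; you make it explicit by exhibiting a direction $h$ orthogonal to the range with $\Var\langle X,h\rangle>0$, a cosmetic rephrasing of the same dimension-counting argument.
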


\ifdefined\JOURNAL
\begin{proof}
    Suppose $f$ and $g$ are linear and let $Y = g(f(X))$.
    Then we can write
    \begin{equation}
        Y(t) = \sum_{i = 1}^k A_i \psi_i(t)
    \end{equation}
    for some random variables $A_1, \ldots, A_k$ and orthonormal functions
    $\psi_1(\cdot), \ldots, \psi_k(\cdot)$. Then the autocorrelation of
    $Y$ has at most $k$ nonzero eigenvalues and hence $X$
    and $Y$ have distinct distributions.
\end{proof}
\else
\fi

The large gap between linear and nonlinear dimensionality reduction
for the sawbridge has consequences for compression, to which we turn next.

\section{Optimal Compression}
\label{sec:optimal}
We consider a one-shot form of compression in which 
the goal is to minimize the entropy of the compressed representation
for a given mean squared error.

\begin{definition}
    An \emph{encoder} is a map
    $f : \Ltwo \mapsto \N$.
    Its \emph{entropy} and \emph{distortion} are
    \begin{align*}
        H(f) & = - \sum_{i \in \N} \P(f(X) = i) 
                \log \P(f(X) = i) 
               \\
        D(f) & = E\left[\int_0^1 (X(t) - E[X(t)|f(X)])^2 \; dt\right],
    \end{align*}
    respectively.\footnote{Throughout, all logarithms are base two.}
    \label{def:optenc}
\end{definition}

Note that
Definition~\ref{def:optenc} assumes that the reproduction
$E[X(t)|f(X)]$ need not be a valid sawbridge realization.
Also note that an encoder is distinct from a transform in that it maps
the source realization to a discrete set and is therefore not invertible.
In practice, compression involves mapping the source realization
to a variable-length bit string, whose
expected length one might wish to minimize. The minimum such
expected length, $L^*(f)$, is known to satisfy
    $H(f) \le L^*(f) < H(f) + 1$
if one requires that the codewords are 
prefix-free~(e.g.,~\cite[Theorem~5.4.1]{Cover:IT2}) and 
\begin{align}
    L^*(f) & \le H(f) \\
    H(f) & \le L^*(f) + (1+L^*(f))\log (1+L^*(f)) - L^*(f) \log L^*(f),
\end{align}
if one does not~\cite[Theorem 1]{Szpankowski:NoPrefix}. 
As such, it is reasonable to
focus on $H(f)$ as the figure-of-merit, especially at high rates.

\begin{definition}
    The \emph{entropy-distortion function} of the sawbridge is
    \begin{equation}
    H(\Delta) = \inf_f H(f),
    \end{equation}
where the infimum is over all encoders $f$ such that
    $D(f) \le \Delta$.
\end{definition}

\begin{theorem}
    If $\Delta \ge 1/6$, then $H(\Delta) = 0$.
For any $0 < \Delta < 1/6$, we have
    \begin{equation}
    H(\Delta) = - \left\lfloor \frac{1}{p} \right\rfloor \cdot p \log p -
       q \log q,
   \end{equation}
   where $q = \left( 1- \left\lfloor \frac{1}{p} \right\rfloor \cdot p\right)$
    and $p$ is the unique number in $(0,1)$ such that 
    \begin{equation}
\left\lfloor \frac{1}{p} \right\rfloor \cdot p^2 + 
  q^2 = 6\Delta.
    \end{equation}
    \label{thm:optimal}
\end{theorem}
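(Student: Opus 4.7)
The plan is to reduce the problem to a one-dimensional optimization on $[0,1]$. Since the sawbridge is a deterministic function of $U$, any encoder $f$ induces a measurable partition $\{A_i\}_{i \in \mathbb{N}}$ of $[0,1]$ with $A_i = \{u : f(X) = i\}$, and $U$ conditioned on $f(X) = i$ is uniform on $A_i$. Writing $p_i := |A_i|$ and $F_i(t) := |A_i \cap [0,t]|$, a direct computation using $X(t) = t - \mathbf{1}(t \ge U)$ gives
\[
D(f) = \sum_i \frac{1}{p_i} \int_0^1 F_i(t)\bigl(p_i - F_i(t)\bigr)\, dt.
\]
Since $F_i$ is $1$-Lipschitz with $F_i' = \mathbf{1}_{A_i}$ and maps $A_i$ bijectively onto $[0,p_i]$, the change of variables $u = F_i(t)$ on $A_i$ gives $\int_{A_i} F_i(p_i - F_i)\, dt = \int_0^{p_i} u(p_i - u)\, du = p_i^3/6$, while the contribution from $A_i^c$ is nonnegative and vanishes exactly when every component of $A_i^c$ has $F_i \in \{0, p_i\}$, i.e., when $A_i$ is an interval. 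Since a consecutive-interval tiling of $[0,1]$ with prescribed masses $\{p_i\}$ is always admissible, the minimum of $D(f)$ over partitions with those masses is $\sum_i p_i^2/6$.

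The problem thus reduces to
\[
H(\Delta) = \inf\Bigl\{ -\sum_i p_i \log p_i : p_i \ge 0,\; \sum_i p_i = 1,\; \sum_i p_i^2 \le 6\Delta \Bigr\}.
\]
For $\Delta \ge 1/6$, taking $p_1 = 1$ is feasible and achieves $H = 0$. For $\Delta < 1/6$, the $\ell^2$ constraint is tight at any minimizer (otherwise mass could be further concentrated on one atom), and the KKT stationarity condition reads $\log p_i - 2\mu p_i = \mathrm{const}$ on the support with $\mu > 0$. Because $x \mapsto \log x - 2\mu x$ is strictly concave on $(0, \infty)$, each of its level sets has at most two elements, so any minimizer assigns at most two distinct positive values $p > q$ to the bins.

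Writing a general two-level candidate as $m$ bins of mass $p$ and $b$ bins of mass $q$, the constraints $mp + bq = 1$ and $mp^2 + bq^2 = 6\Delta$ determine $(p, q)$ for each integer pair $(m, b)$. It remains to show that the entropy is minimized by $b = 1$ and $m = \lfloor 1/p \rfloor$; once this is established, substituting $q = 1 - mp$ into the $\ell^2$ constraint yields the stated quadratic for $p$, whose unique root in $(1/(m+1), 1/m]$ gives the claimed formula. I expect this combinatorial comparison to be the main technical obstacle: naively merging the $b$ small bins into a single bin of mass $bq$ decreases entropy but inflates $\sum p_i^2$, so the argument must couple that perturbation to a compensating adjustment of the $p$-bins, ultimately showing that along any feasible one-parameter family indexed by $b$ the entropy strictly decreases as $b \to 1$. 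The remaining ingredients — the distortion identity, the interval-optimality via change of variables, the KKT reduction, and the trivial case $\Delta \ge 1/6$ — are by contrast routine.
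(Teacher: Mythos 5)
Your reduction is correct, and the interval-optimality step is in fact cleaner than the paper's: you derive $\int_0^1 F_i(p_i - F_i)\,dt \ge p_i^3/6$ directly by the change of variables $u = F_i(t)$ (using that $F_i$ is absolutely continuous with $F_i' = \1_{A_i}$ a.e., so the whole integral splits into an $A_i$ part equal to $p_i^3/6$ and a nonnegative remainder), whereas the paper first proves the bound for finite unions of intervals and then extends to general measurable $A_i$ by approximation and dominated convergence. Both routes arrive at the same reduced program $\inf\{-\sum_i p_i \log p_i : p_i \ge 0,\ \sum_i p_i = 1,\ \sum_i p_i^2 \le 6\Delta\}$. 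The paper then finishes by observing this is the interval-cell entropy--$L^1$-distortion tradeoff for quantizing a $\unif[0,1]$ variable and citing a lemma of Gy\"orgy and Linder; you attempt the optimization directly.

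Your KKT-plus-strict-concavity argument for the two-level structure is sound in outline (and it also forces the support to be finite, since only finitely many $p_i$ can equal a fixed positive value), but the step you flag yourself --- that the optimum has exactly one bin at the smaller mass $q$ and $\lfloor 1/p\rfloor$ bins at the larger mass $p$ --- is a genuine gap, and it is precisely the content of the cited Gy\"orgy--Linder result. It is not a routine verification: the two-level configurations indexed by integer pairs $(m,b)$ are disjoint one-parameter families in the $(p,q)$-plane determined by $mp+bq=1$ and $mp^2+bq^2=6\Delta$, and first-order stationarity says nothing about comparing entropy across different $(m,b)$. Your intuition about where the difficulty lies (merging the $b$ small bins inflates $\sum p_i^2$ and must be offset) is accurate, but until you actually carry out that compensated comparison, or invoke the known lemma as the paper does, the proof is incomplete.
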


\ifdefined\JOURNAL
\begin{proof}
    If $\Delta \ge 1/6$, then a trivial encoder with a singleton
    range achieves $D(f) \le \Delta$ and $H(f) = 0$. Suppose
    $0 < \Delta < 1/6$.
    Each realization of $X$ can be identified with a unique
    realization of $U$, so the realizations of $X$
    are in one-to-one 
    correspondence with $[0,1]$. Thus for any encoder $f$ and
    any $i$, $f^{-1}(i)$ can be identified with a subset of
    $[0,1]$, say $A_i$, such that
    \begin{align*}
        & E\left[\int_0^1 (X(t) - E[X(t)|f(X)])^2 \; dt
               \middle|f(X) = i\right] \\
               & = E\left[\int_0^1 (X(t) - E[X(t)|U \in A_i])^2 \; dt
                 \middle|U \in A_i\right].
    \end{align*}
    Now consider an arbitrary $A \subset [0,1]$ such that $\mu(A) > 0$,
    where $\mu(\cdot)$ is the Lebesgue measure. We will show the
    inequality
    \begin{equation}
        \label{eq:distortionLB}
        E\left[\int_0^1 (X(t) - E[X(t)|U \in A])^2 \; dt
                  \middle|U \in A\right] \ge
         \frac{\mu(A)}{6}.
    \end{equation}
    To this end, note that we can write
    \begin{align}
        &  E\left[\int_0^1 (X(t) - E[X(t)|U \in A])^2 \; dt\middle|U \in A\right] \\
        & = E\left[\int_0^1 (\1(U \le t) - E[\1(U \le t)|U \in A])^2 \; dt\middle|U \in A\right] \\
        & = \int_{0}^{1} \P(U \le t|U \in A)
               (1 - \P(U \le t|U \in A)) \; dt.
               \label{eq:varform}
        \end{align}
    First suppose that $A$ is a union of intervals,
    \begin{equation}
    A = \cup_{i = 1}^k [a_i,b_i],
    \end{equation}
    where $0 \le a_1 < b_1 \le a_2 < b_2 \le \ldots \le a_k < b_k \le 1$.
    Define
    \begin{equation}
  \beta_0 = 0 \quad \text{and} \quad \beta_i = \sum_{\ell = 1}^i \frac{b_\ell -
          a_\ell}{\mu(A)}
    \end{equation}
    and note that $\beta_k = 1$.  Then from (\ref{eq:varform}) we have
    \begin{align}
        &  E\left[\int_0^1 (X(t) - E[X(t)|U \in A])^2 \; dt
             \middle|U \in A\right] \\
        & \ge \sum_{\ell = 1}^k \int_{a_{\ell}}^{b_{\ell}} \P(U \le t|U \in A)
             (1 - \P(U \le t|U \in A)) \; dt \\
       & = \sum_{\ell = 1}^k \int_{a_\ell}^{b_\ell} \left(\beta_{\ell-1} +
                 \frac{t - a_{\ell}}{\mu(A)} \right)
                 \left( 1- \beta_{\ell - 1} - \frac{t - a_{\ell}}
                    {\mu(A)} \right) \; dt.
                    \label{eq:needvarchange}
    \end{align}
    Performing the change of variable 
    \begin{equation}
        s = \beta_{\ell - 1} + \frac{t - a_{\ell}}{\mu(A)},
    \end{equation}
    the quantity in~(\ref{eq:needvarchange}) equals
    \begin{align}
        \sum_{\ell = 1}^k \int_{\beta_{\ell-1}}^{\beta_{\ell}} 
                     s(1-s) \; ds \cdot \mu(A) 
             & = \int_0^1 s(1-s) \; ds \cdot \mu(A) \\
             & = \frac{\mu(A)}{6}.
    \end{align}
    To handle the general case, let $\{A_n\}_{n = 1}^\infty$ be a sequence
    of sets, each of which is a finite union of intervals, such 
    that~\cite[Theorem~1.20]{Folland:RealAnalysis}
    \begin{equation}
        \lim_{n \rightarrow \infty} \mu(A \cap {A}_n^{c}) + 
                  \mu(A_n \cap A^c) = 0.
    \end{equation}
    Then for each $t$,
    \begin{align}
        \lim_{n \rightarrow \infty} \P(U \le t|U \in A_n)
            & = \lim_{n \rightarrow \infty} \frac{\mu([0,t] \cap A_n)}
                        {\mu(A_n)} \\
                        & = \frac{\mu([0,t] \cap A)}{\mu(A)} \\
                        & = \P(U \le t|U \in A).
    \end{align}
    It follows by dominated convergence that
    \begin{align}
        & \int_{0}^1 P(U \le t|U \in A) ( 1- \P(U \le t|U \in A)) \; dt \\
        & = \lim_{n \rightarrow\infty}
          \int_{0}^1 P(U \le t|U \in A_n) ( 1- \P(U \le t|U \in A_n)) \; dt \\
          & \ge \lim_{n \rightarrow \infty} \frac{\mu(A_n)}{6}
              = \frac{\mu(A)}{6},
    \end{align}
    which establishes~(\ref{eq:distortionLB}) and further implies
    that (\ref{eq:distortionLB}) holds with equality if $A$ is an interval.
     Applying this to the distortion,
    \begin{align*}
        D(f)  & =  E\left[\int_0^1 (X(t) - E[X(t)|f(X)])^2 \; dt\right]  \\
       & = \sum_{i = 1}^\infty \P(U \in A_i) 
        E\left[\int_0^1 (X(t) - E[X(t)|U \in A_i])^2 \; dt\middle|
                 U \in A_i \right] \\
        & \ge \sum_{i = 1}^\infty \P(U \in A_i) \cdot \frac{\mu(A_i)}{6} \\
        & = \sum_{i = 1}^\infty \frac{\mu(A_i)^2}{6},
    \end{align*}
    with equality if all of the (nonempty) $A_i$ are intervals.
    Since $H(f)$ depends on the $\{A_i\}$ only through $\{\mu(A_i)\}$,
    it follows that we can restrict attention to encoders that quantize
    $U$ to intervals.
    Writing $p_i = \mu(A_i)$, we can then express the entropy-distortion
    tradeoff as 
    \begin{align*}
        H(\Delta) = 
        \inf_{\{p_i\}_{i = 1}^\infty} & - \sum_{i = 1}^\infty p_i \log p_i \\
        \text{subject to} \ & p_i \ge 0 \ \text{for all $i$} \\
                  & \sum_{i = 1}^\infty p_i = 1 \\
              & \sum_{i = 1}^\infty \frac{p^2_i}{6} \le \Delta.
    \end{align*}
    This is also the entropy-distortion tradeoff for the problem of
    quantizing a $\unif[0,1]$ random variable subject to an $L^1$ 
    distortion constraint of $3\Delta/2$, assuming that all of the
    quantization cells are intervals. The latter problem is solved
    by a more general result of Gy\"{o}rgy and 
    Linder~\cite{Gyorgy:Uniform}, from 
    which the conclusion follows.
\end{proof}
\else
\fi

A plot of $H(\Delta)$ is included in \cref{fig:entropy_distortion}. Note that,
unlike the rate-distortion function, the entropy-distortion
function is not guaranteed to be convex and indeed it is not
in this case.  It reaches its lower convex envelope, which
we denote by $\lce(H(\cdot))$, at points of the 
form $(\log M, 1/(6M))$~\cite[Corr.~6]{Gyorgy:Uniform}. 
These points are achieved by encoders that quantize $U$ to
one of several equal-sized intervals. In particular,
for any $\lambda > 0$, minimizers of the Lagrangian
\begin{equation}
    \label{eq:Lagrangian}
\min_f \ H(f) + \lambda \cdot D(f)
\end{equation}
are encoders of this type.  Likewise, 
since $\lce(H(\cdot))$ describes the
entropy-distortion tradeoff of randomized encoders,
the best randomized encoders are those that randomly 
toggle among deterministic encoders that uniformly quantize $U$.

\ifdefined\JOURNAL
Although Theorem~\ref{thm:optimal} was motivated by a desire to
characterize the best variable-rate encoders, it implies
the following characterization of the best fixed-rate encoders.

\begin{corollary}
    Define an \emph{$M$-encode} for the sawbridge as an encoder with
    the property that the support of $f(X)$ has cardinality
    $M$ or less. The minimum distortion among all $M$-codes is
    $\frac{1}{6M}$, which is achieved by an encoder that quantizes
    $U$ uniformly to $M$ different values. 
\end{corollary}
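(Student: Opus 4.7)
The plan is to piggyback on the reduction already carried out in the proof of \cref{thm:optimal}. That proof shows, via the inequality~(\ref{eq:distortionLB}), that any encoder $f$ induces a partition of $[0,1]$ into sets $\{A_i\}$ (obtained by identifying realizations of $X$ with realizations of $U$) and that
\begin{equation*}
    D(f) \;\geq\; \sum_{i} \frac{\mu(A_i)^2}{6},
\end{equation*}
with equality whenever the nonempty $A_i$ are intervals. I would cite this bound directly rather than rederiving it.

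Now I specialize to an $M$-code. Then at most $M$ of the sets $A_i$ are nonempty; let $p_1, \ldots, p_M \geq 0$ denote their measures (padding with zeros if fewer than $M$), so that $\sum_{i=1}^M p_i = 1$. The problem of minimizing the distortion over all $M$-codes therefore reduces to
\begin{equation*}
    \inf\Bigl\{ \tfrac{1}{6}\sum_{i=1}^M p_i^2 \,:\, p_i \geq 0,\ \sum_{i=1}^M p_i = 1 \Bigr\}.
\end{equation*}
By Jensen's inequality applied to the convex function $x \mapsto x^2$ (equivalently, Cauchy--Schwarz), this minimum equals $\frac{1}{6M}$ and is attained uniquely by $p_i = 1/M$ for all $i$.

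To see that this bound is actually achieved---and not merely approached---I take the encoder that maps $X$ to the index $i$ for which $U \in [(i-1)/M, i/M)$, $i = 1,\ldots,M$. Since the $A_i$'s are intervals of length $1/M$, equality holds in~(\ref{eq:distortionLB}) for each one, so $D(f) = \frac{1}{6M}$ exactly, matching the lower bound.

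There is no real obstacle here: all the analytic heavy lifting (especially the lower bound~(\ref{eq:distortionLB}) and the reduction to interval-shaped $A_i$) has already been done inside the proof of \cref{thm:optimal}. The only new ingredient is the elementary observation that the quadratic sum $\sum p_i^2$ under a simplex constraint is minimized at the uniform distribution, which is immediate from Jensen. The statement can therefore be written as a one-paragraph corollary.
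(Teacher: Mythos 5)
Your proof is correct, but it takes a genuinely different route from the paper's. The paper treats the corollary as a direct consequence of \cref{thm:optimal}: an $M$-code has entropy at most $\log M$, and since the entropy--distortion function is strictly decreasing with $H(1/(6M)) = \log M$, any encoder with distortion below $1/(6M)$ would need entropy strictly greater than $\log M$ --- a contradiction. Your proof instead reaches back into the \emph{proof} of \cref{thm:optimal}, reusing the intermediate distortion lower bound $D(f) \ge \tfrac{1}{6}\sum_i \mu(A_i)^2$ and then applying Jensen's inequality directly to the constrained minimization $\min\{\sum_{i=1}^M p_i^2 : p_i \ge 0, \sum p_i = 1\}$. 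Both are sound. Your approach is more self-contained in the sense that it never invokes the explicit entropy--distortion formula or the Gy\"{o}rgy--Linder result, only the geometric distortion bound plus a one-line convexity argument; the paper's approach is shorter and cleaner as a statement-level deduction from the theorem, but it implicitly relies on strict monotonicity of $H(\Delta)$ (which does hold, since $H \ge \lce(H)$ and the latter is strictly decreasing on $(0,1/6)$). Your route arguably makes the fixed-rate result stand on its own more transparently, at the cost of depending on internal machinery of another proof rather than on the theorem statement itself.
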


\begin{proof}
  The encoder $f$ that uniformly quantizes $U$ to $M$ cells achieves 
      $D(f) = 1/(6M)$. Any other encoder that quantizes $U$ into at
      most $M$ cells must have entropy at most $\log M$ and therefore,
      by Theorem~\ref{thm:optimal}, distortion at least $1/(6M$).
\end{proof}
\else
\fi

Theorem~\ref{thm:optimal} also implies a simple high-rate characterization.

\begin{corollary}
    For the sawbridge,
    \begin{equation}
        \lim_{\Delta \rightarrow 0} \left|
            H(\Delta) - \log \frac{1}{6 \Delta} \right| = 0.
    \end{equation}
    \label{corollary:opthighrate}
\end{corollary}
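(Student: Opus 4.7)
The plan is to invoke \cref{thm:optimal} to write $H(\Delta)$ in terms of $p$ and $M := \lfloor 1/p \rfloor$, and then show that both $H(\Delta)$ and $\log(1/(6\Delta))$ are squeezed to within $o(1)$ of $\log M$ as $\Delta \to 0$. The preliminary observation needed is that $p \to 0$, equivalently $M \to \infty$, in this limit. Since $Mp = 1 - q$ and $0 \le q < p$, we have $Mp^2 = p(1-q) \ge p(1-p)$, so the defining constraint gives $6\Delta \ge p(1-p)$. As $\Delta \to 0$ this forces $p(1-p) \to 0$; the branch $p \to 1$ is excluded because it corresponds to $M = 1$, $q = 1-p$, and $6\Delta = p^2 + (1-p)^2 \to 1$, not to $\Delta \to 0$.

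Next I would sandwich $H(\Delta)$ between $\log M$ and $\log(M+1)$. The distribution whose entropy appears in \cref{thm:optimal} has $M$ atoms of mass $p$ and (when $q > 0$) one additional atom of mass $q < p$. Since every atom has mass at most $p$ and $p \le 1/M$,
\begin{equation*}
H(\Delta) = -Mp\log p - q\log q \ge -(Mp + q)\log p = -\log p \ge \log M,
\end{equation*}
while since the support has at most $M+1$ elements, $H(\Delta) \le \log(M+1)$.

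Finally I would sandwich $\log(1/(6\Delta))$ in the same way. Using $1/(M+1) < p \le 1/M$ and $0 \le q < p$ in the constraint $6\Delta = Mp^2 + q^2$ gives
\begin{equation*}
\frac{M}{(M+1)^2} \le Mp^2 \le 6\Delta \le (M+1)p^2 \le \frac{M+1}{M^2},
\end{equation*}
so
\begin{equation*}
\log M - \log\!\left(1 + \tfrac{1}{M}\right) \le \log \frac{1}{6\Delta} \le \log(M+1) + \log\!\left(1 + \tfrac{1}{M}\right).
\end{equation*}
Combining the two sandwiches yields $|H(\Delta) - \log(1/(6\Delta))| \le 2\log(1 + 1/M)$, which tends to $0$ as $M \to \infty$. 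The whole argument is elementary; the only step requiring any thought is ruling out $p \to 1$ when confirming that $\Delta \to 0$ drives $M \to \infty$.
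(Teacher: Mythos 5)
Your proof is correct, but it takes a different route from the paper's. The paper defines $M(\Delta) = \lfloor 1/(6\Delta) \rfloor$ (so that $\tfrac{1}{6(M+1)} < \Delta \le \tfrac{1}{6M}$), observes from \cref{thm:optimal} that $H\bigl(\tfrac{1}{6M}\bigr) = \log M$ exactly (taking $p = 1/M$, $q = 0$), and then invokes monotonicity of $H(\cdot)$ to get $\log M(\Delta) \le H(\Delta) \le \log(M(\Delta)+1)$; since $6\Delta M(\Delta) \to 1$, both $H(\Delta)$ and $\log\tfrac{1}{6\Delta}$ differ from $\log M(\Delta)$ by $o(1)$. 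You instead parse the formula of \cref{thm:optimal} at the given $\Delta$ itself, set $M = \lfloor 1/p \rfloor$, and bound the entropy $-Mp\log p - q\log q$ directly between $\log M$ and $\log(M+1)$, then separately sandwich $6\Delta = Mp^2 + q^2$ in terms of $M$. Both are elementary squeeze arguments resting on \cref{thm:optimal}, but the paper's version sidesteps your preliminary step of establishing $p \to 0$ (which you handle correctly, including ruling out the spurious branch $p \to 1$): in the paper's framing, $M(\Delta) = \lfloor 1/(6\Delta) \rfloor \to \infty$ is immediate. What your version buys is that it never appeals to monotonicity of $H(\cdot)$ and it makes the entropy bound $-\log p \le H(\Delta) \le \log(M+1)$ explicit, which is slightly more informative; what the paper's buys is brevity, since it only evaluates the theorem's formula at the particularly clean distortion levels $1/(6M)$. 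Note also that your $M = \lfloor 1/p \rfloor$ and the paper's $M(\Delta) = \lfloor 1/(6\Delta) \rfloor$ need not coincide for a given $\Delta$ (your own bounds only pin $1/(6\Delta)$ between $M^2/(M+1)$ and $(M+1)^2/M$), but this is harmless since both diverge.
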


\ifdefined\JOURNAL
\begin{proof}
    Define the function $M(\Delta)$ as $\lfloor \frac{1}{6 \Delta} \rfloor$.
    Then $\frac{1}{6(M(\Delta)+1)} < \Delta \le \frac{1}{6M(\Delta)}$, so
    by the monotonicity of $H(\cdot)$ and Theorem~\ref{thm:optimal},
    \begin{equation}
        \log M(\Delta) \le H(\Delta) < \log (M(\Delta)+1).
    \end{equation}
    Since $6\Delta M(\Delta) \rightarrow 1$, this implies the result.
\end{proof}
\else
\fi

Note that this corollary implies that $H(\Delta) / \log \frac{1}{6 \Delta}
\rightarrow 1$ but is significantly stronger. Next we shall see that
a compressor that follows the classical approach based on the 
KLT is far from meeting this optimal 
performance at high rates.

\section{KLT-Based Compression}
\label{sec:linear}
The classical approach to compressing a source such as the sawbridge
is to uniformly quantize and separately entropy-code a subset of the 
KLT coefficients.
Specifically, given a target distortion $\Delta$, define the constants
\ifdefined\JOURNAL
\begin{align}
    \label{eq:optD}
    D & = \frac{\Delta^2 \pi^2}{4} \\
    \label{eq:optk}
    K & = \left\lceil \frac{1}{\pi \sqrt{D}} \right\rceil \\
    \label{eq:optdelta}
    \delta & = \frac{\sqrt{12 \gamma}}{K}
\end{align}
\else
$D = \frac{\Delta^2 \pi^2}{4}$,
    $K = \left\lceil \frac{1}{\pi \sqrt{D}} \right\rceil$,
    and
    $\delta = \frac{\sqrt{12 \gamma}}{K}$,
\fi
where $\gamma > 0$ is the unique solution to the fixed-point equation
$    \atan(\pi \sqrt{\gamma}) = \frac{1}{\pi \sqrt{\gamma}}.$
Note that the dependence on $\Delta$ is suppressed in all three constants.
Consider the stochastic encoder $f_{\Delta}$ that quantizes the first
$K$ coefficients of the KLT to resolution $\delta$
using random dither. That is,
\begin{equation}
    f_{\Delta}(X) = \left(\left\lfloor \frac{ Y_{\ell}}{\delta} 
        + U_{\ell} \right\rceil, \ell \in \{1,\ldots,K\}\right),
\end{equation}
where $U_1,\ldots,U_K$ are i.i.d.\ $\unif[-1/2,1/2]$ and
$\lfloor \cdot \rceil$ denotes rounding to the nearest integer.
The $U_1,\ldots,U_K$ represent side randomness that is 
independent of the source and available when decoding.
We first show that $f_{\Delta}$ achieves distortion $\Delta$. 

\begin{lemma}
    The encoder $f_{\Delta}$ satisfies
        $D(f_{\Delta}) \le \Delta$
    for all $0 < \Delta < 1/6$.
\end{lemma}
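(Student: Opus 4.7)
The plan is to upper-bound $D(f_\Delta)$ by exhibiting an explicit (suboptimal) reconstruction rule and computing its expected mean-squared error; since the dither $U_1, \ldots, U_K$ is shared randomness available at the decoder, the MMSE reconstruction $\E[X(t) \mid f_\Delta(X)]$ in the distortion formula is no worse than any specific rule that uses $(f_\Delta(X), U_1, \ldots, U_K)$.

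First I would invoke the subtractive-dither identity: for each $\ell \in \{1, \ldots, K\}$, setting $\hat{Y}_\ell := \delta(q_\ell - U_\ell)$ with $q_\ell = \lfloor Y_\ell/\delta + U_\ell \rceil$ yields $\hat{Y}_\ell = Y_\ell + N_\ell$, where $N_\ell \sim \unif[-\delta/2, \delta/2]$ is independent of $Y_\ell$. Then I would take as decoder the coefficient-wise linear MMSE (Wiener) estimator
\[
\hat{X}(t) = \sum_{\ell=1}^K c_\ell \hat{Y}_\ell \phi_\ell(t), \qquad c_\ell = \frac{\lambda_\ell}{\lambda_\ell + \delta^2/12},
\]
and, invoking Parseval on the ($L^2$-convergent) KLT expansion together with orthonormality of $\{\phi_\ell\}$, write
\[
\int_0^1 (X(t) - \hat{X}(t))^2 \, dt = \sum_{\ell=1}^K (Y_\ell - c_\ell \hat{Y}_\ell)^2 + \sum_{\ell = K+1}^\infty Y_\ell^2 \quad \text{a.s.}
\]

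Taking expectations and using the standard Wiener formula, $\E[(Y_\ell - c_\ell \hat{Y}_\ell)^2] = (1/\lambda_\ell + 12/\delta^2)^{-1} = (\pi^2 \ell^2 + K^2/\gamma)^{-1}$, after substituting $\lambda_\ell = 1/(\pi^2 \ell^2)$ and $\delta^2 = 12\gamma/K^2$. Since the summand is decreasing in $\ell$, integral comparison gives
\[
\sum_{\ell=1}^K \frac{1}{\pi^2 \ell^2 + K^2/\gamma} \le \int_0^K \frac{dx}{\pi^2 x^2 + K^2/\gamma} = \frac{\sqrt{\gamma}}{\pi K}\, \atan(\pi \sqrt{\gamma}) = \frac{1}{\pi^2 K},
\]
where the last equality is exactly the fixed-point equation defining $\gamma$. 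The tail is similarly bounded by $\sum_{\ell > K} \lambda_\ell \le 1/(\pi^2 K)$. Summing gives expected distortion at most $2/(\pi^2 K)$, and since $K \ge 1/(\pi\sqrt{D}) = 2/(\pi^2 \Delta)$, this is at most $\Delta$.

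The main thing to recognize is that the naive subtractive-dither reconstruction $\hat{Y}_\ell = Y_\ell + N_\ell$ (no shrinkage) only yields $K\delta^2/12 + \sum_{\ell > K} \lambda_\ell \approx (\gamma \pi^2 + 1)\Delta/2$, which exceeds $\Delta$ because the root of the fixed-point equation satisfies $\gamma > 1/\pi^2$ (since $\atan(1) = \pi/4 < 1$). Wiener shrinkage on the retained coefficients is essential, and the particular $\atan$ characterization of $\gamma$ is precisely what makes the integral bound on the first sum collapse to $1/(\pi^2 K)$, so that the quantization and tail contributions split the distortion budget $\Delta$ evenly.
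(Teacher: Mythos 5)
Your proof is correct and follows essentially the same route as the paper's: subtractive dither to reduce to an additive uniform-noise channel, per-coefficient Wiener shrinkage on the first $K$ KLT coefficients, a Parseval decomposition, and integral-test bounds that collapse via the $\atan$ fixed-point equation for $\gamma$. Your closing observation—that without the shrinkage factor $c_\ell$ the budget is exceeded because $\gamma > 1/\pi^2$—is a nice addition that the paper does not spell out, but it is an aside rather than a different proof.
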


\ifdefined\JOURNAL
\begin{proof}
    Consider the decoder that reproduces 
      $Y_{\ell}$ from the encoded representation and
      $U_{\ell}$ via
      \begin{equation}
          \hat{Y}_{\ell} :=  \begin{cases}
             \frac{\lambda_{\ell}}{\lambda_{\ell} + \delta^2/12}
          \left(\left\lfloor \frac{ Y_{\ell}}{\delta}
          + U_{\ell} \right\rceil - U_{\ell}\right) \cdot\delta
               & \text{if $\ell \le K$} \\
               0 & \text{if $\ell > K$}.
         \end{cases}
      \end{equation}
      That is, the decoder uses $U_{\ell}$ only to subtract
      the effect of the dither and otherwise uses a linear estimator.
      Dithered quantization is known to be equivalent to an additive-noise
      channel~\cite{Zamir:Bandlimited}:
      \begin{equation}
          \left(Y_{\ell}, \left(\left\lfloor \frac{Y_{\ell}}{\delta}
          + U_{\ell} \right\rceil - U_{\ell}\right) \cdot\delta \right)
            \stackrel{d}{=} \left(Y_{\ell}, Y_{\ell} + \delta \cdot
                U_{\ell}\right) \quad \ell \le K.
      \end{equation}
      Thus the distortion satisfies
      \begin{align}
          D(f_{\Delta}) & \le \sum_{\ell = 1}^\infty (\hat{Y}_{\ell}
               - Y_{\ell})^2 \\
               & = \sum_{\ell = 1}^{K} (\hat{Y}_{\ell}
               - Y_{\ell})^2 + 
                 \sum_{\ell = K+1}^\infty \lambda_{\ell} \\
          &  = \frac{1}{K^2} \sum_{\ell = 1}^{K}
                  \frac{\gamma}{1 + \gamma \pi^2 \ell^2/K^2}
                    + \sum_{\ell = K+1}^\infty \frac{1}{\pi^2 \ell^2} \\
         & \le \frac{1}{K} \int_0^1
                  \frac{\gamma}{1 + \gamma \pi^2 x^2} \; dx
                  + \int_{K}^\infty \frac{1}{\pi^2 x^2} \; dx.
      \end{align}
      Evaluating the two integrals (using the fact that $\atan(x)$ is
       the antiderivative of $1/(1+x^2)$)
      and using~(\ref{eq:optk}) and~(\ref{eq:optD}) gives the conclusion.
\end{proof}
\else
\fi

Since $f_{\Delta}(\cdot)$ is a stochastic encoder that
achieves distortion $\Delta$, it follows that
\begin{equation}
    \lce(H(\cdot))(\Delta) 
       \le H(f_{\Delta}|\{U_\ell\}) \le \sum_{\ell = 1}^{K} 
            H\left(\left\lfloor \frac{Y_{\ell}}{\delta} + U_{\ell}
                  \right\rceil \middle| U_{\ell}\right)
            =: \overline{H}(\Delta).
\end{equation}
Note that $\overline{H}(\Delta)$ is the rate that is achievable
if the quantized components are compressed separately, as is
typically done in practice.
When compressing a stationary Gaussian process over a long horizon
the analogue of the $\overline{H}(\Delta)$ bound is provably near-optimal
at high rates~(combining \citep[Sec.~5.6.2]{Pearlman:Said} and 
\citep[Sec~4.5.3]{Berger:RD}).
Comparing the following result with \cref{corollary:opthighrate}
shows that for the sawbridge, this bound is poor in the high-rate
regime.

\begin{theorem}
    Let $s(\cdot)$ 
    denote the arcsine density over $[-\sqrt{2},\sqrt{2}]$, i.e.,
    \begin{equation}
        s(x) = \frac{1}{\pi \sqrt{(\sqrt{2}-x)(\sqrt{2}+x)}},
        \label{eq:arcsine}
    \end{equation}
    and let $u_{x}(\cdot)$ denote the uniform density over $[-x/2,x/2]$.
    Then
    \begin{equation}
        \label{eq:KLTfinitek}
     \overline{H}(\Delta) = \sum_{\ell = 1}^K
     \left[ h(s(\cdot) \conv u_{\sqrt{12 \gamma} \pi \ell/K}(\cdot)) - 
             \log \frac{\pi \ell \sqrt{12 \gamma}}{K}
                \right],
     \end{equation}
     and as a result,
    \begin{equation}
        \label{eq:KLTrateconst}
        \lim_{\Delta \rightarrow 0} \Delta \cdot \overline{H}(\Delta) 
        = \frac{2}{\pi^2} \cdot \left(\int_0^1 h(s(\cdot) \conv 
            u_{\pi x \sqrt{12 \gamma}}(\cdot)) \; dx -
                     \log (\pi\sqrt{12 \gamma}/e) \right),
    \end{equation}
    where $h(\cdot)$ denotes differential entropy and $\conv$
    denotes convolution.
\end{theorem}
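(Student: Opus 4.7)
My plan is to first derive the per-coefficient identity in (\ref{eq:KLTfinitek}) via dithered-quantization theory, and then extract the high-rate constant in (\ref{eq:KLTrateconst}) by treating the sum as a Riemann sum and invoking $K\sim 2/(\pi^2\Delta)$.

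First I would apply the Zamir--Feder identity for subtractively dithered uniform quantization. Writing $V_\ell:=\delta U_\ell\sim\unif[-\delta/2,\delta/2]$, which is independent of $Y_\ell$, this yields
\begin{equation*}
H\bigl(\lfloor Y_\ell/\delta+U_\ell\rceil\,\big|\,U_\ell\bigr)=h(Y_\ell+V_\ell)-h(V_\ell)=h(Y_\ell+V_\ell)-\log\delta.
\end{equation*}
Next I would identify the marginal law of $Y_\ell$. From (\ref{eq:explicitKLTcoeff}), $Y_\ell=-\sqrt{2\lambda_\ell}\cos(\pi\ell U)$; partitioning $[0,1]$ into the $\ell$ subintervals on which $\pi\ell U$ sweeps through a half-period of $\cos$ shows that $Y_\ell/\sqrt{\lambda_\ell}$ has the arcsine density $s(\cdot)$ of (\ref{eq:arcsine}). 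Using the scaling identity $h(aZ)=h(Z)+\log|a|$ together with $\sqrt{\lambda_\ell}=1/(\pi\ell)$ and $\delta=\sqrt{12\gamma}/K$ (so that $V_\ell/\sqrt{\lambda_\ell}$ has density $u_{\pi\ell\sqrt{12\gamma}/K}$) gives
\begin{equation*}
h(Y_\ell+V_\ell)-\log\delta=h\bigl(s\conv u_{\pi\ell\sqrt{12\gamma}/K}\bigr)-\log\bigl(\pi\ell\sqrt{12\gamma}/K\bigr).
\end{equation*}
Summing over $\ell\in\{1,\ldots,K\}$ yields (\ref{eq:KLTfinitek}).

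For (\ref{eq:KLTrateconst}), set $c:=\sqrt{12\gamma}$ and $g(x):=h(s\conv u_{\pi xc})-\log(\pi xc)$, so that $\overline{H}(\Delta)=K\cdot\frac{1}{K}\sum_{\ell=1}^{K}g(\ell/K)$. From (\ref{eq:optD}) and (\ref{eq:optk}) one has $K=\lceil 2/(\pi^2\Delta)\rceil$, hence $\Delta K\to 2/\pi^2$. If the Riemann-sum convergence $\frac{1}{K}\sum_{\ell=1}^{K}g(\ell/K)\to\int_0^1 g(x)\,dx$ holds, then using $\int_0^1\log x\,dx=-\log e$ to evaluate $\int_0^1\log(\pi xc)\,dx=\log(\pi c)-\log e$ produces exactly the right-hand side of (\ref{eq:KLTrateconst}).

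The main obstacle is the Riemann-sum convergence, because $g$ has a singularity at $x=0$: the entropy $h(s\conv u_{\pi xc})$ is continuous and tends to $h(s)<\infty$ as $x\downarrow 0$, but $-\log(\pi xc)\to+\infty$. I would handle this by a split-and-dominate argument. On any fixed interval $[\epsilon,1]$, $g$ is continuous (via the standard continuity of $w\mapsto h(s\conv u_w)$ for $w>0$) and bounded, so the truncated Riemann sum converges to $\int_\epsilon^1 g$ by uniform continuity. For the tail $x\in(0,\epsilon]$, the elementary bound $h(s\conv u_w)\le\log(2\sqrt{2}+w)$ (since $s\conv u_w$ is supported on an interval of length $2\sqrt{2}+w$) gives $0\le g(x)\le\log\bigl(1+2\sqrt{2}/(\pi xc)\bigr)$, a Lebesgue-integrable majorant on $(0,1]$ of order $|\log x|$. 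Comparing the tail Riemann partial sum to the corresponding integral of this majorant and sending $\epsilon\to 0$ after $K\to\infty$ finishes the argument.
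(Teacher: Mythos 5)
Your proposal matches the paper's proof in all its essential steps: the Zamir--Feder identity for subtractively dithered quantization, the identification of $Y_\ell/\sqrt{\lambda_\ell}$ with the arcsine law, the scaling identity to reduce to $s\conv u_{\pi\ell\sqrt{12\gamma}/K}$, and the passage to the limit via a Riemann sum with $K\Delta\to 2/\pi^2$. The only divergence is in how the limiting Riemann sum is justified: the paper separates the summand into the entropy term (handled by continuity of $y\mapsto h(s\conv u_y)$ and dominated convergence) and the $-\log(\ell/K)$ term (handled by monotone convergence), whereas you keep the combined integrand $g(x)=h(s\conv u_{\pi xc})-\log(\pi xc)$, bound it by the integrable majorant $\log\bigl(1+2\sqrt{2}/(\pi xc)\bigr)$ using the support bound $h(s\conv u_w)\le\log(2\sqrt{2}+w)$ and the lower bound $h(s\conv u_w)\ge\log w$, and do a split-and-dominate argument --- a slightly more self-contained treatment of the singularity at $x=0$, but not a different route.
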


\ifdefined\JOURNAL
\begin{proof}
    Due to the dithering, the discrete entropy $H(\lfloor Y_{\ell}
        /\delta + U_{\ell} \rceil \ |U_{\ell})$ can be written as
    a mutual information~\cite[Theorem~1]{Zamir:Universal}
    \begin{align}
      H(\lfloor Y_{\ell} /\delta + U_{\ell} \rceil|U_{\ell})
        & = I(Y_{\ell}; Y_{\ell} + \delta U_{\ell}) \\
        & = h(Y_{\ell} + \delta U_{\ell}) - \log(\sqrt{12 \gamma}/K).
    \end{align}
    Let $Y$ have the arcsine distribution in~(\ref{eq:arcsine}). Then 
    $Y_{\ell}$ is identically distributed with $Y/(\pi \ell)$, so
    we have
    \begin{align}
      h(Y_{\ell} + \delta U_{\ell}) & = h(Y + \pi \ell \delta U_{\ell}) - \log(\ell \pi) \\ 
      & = h(s(\cdot) \conv u_{\sqrt{12 \gamma} \pi \ell/K}(\cdot)) - 
          \log(\ell \pi). 
    \end{align}
    This yields an exact expression for $\Delta \cdot \overline{H}(\Delta)$:
    \begin{align}
     \Delta \cdot \overline{H}(\Delta) & = \Delta \sum_{\ell = 1}^K
        \left[ h(s(\cdot) \conv u_{\sqrt{12 \gamma} \pi \ell/K}(\cdot)) - 
             \log \frac{\pi \ell \sqrt{12 \gamma}}{K}
              \right]  \\
             & = (K \Delta)  \sum_{\ell = 1}^K
            \left[ h(s(\cdot) \conv u_{\sqrt{12 \gamma} \pi \ell/K}(\cdot)) -
              \log \frac{\ell}{K} - 
              \log \sqrt{12 \gamma \pi^2} \right] \frac{1}{K} \\
          & \rightarrow \frac{2}{\pi^2} \cdot \left(\int_0^1 h(s(\cdot) \conv 
              u_{\pi x \sqrt{12 \gamma}}(\cdot)) \; dx -
                 \int_{0}^1 \log x \; dx -
                 \log (\sqrt{12 \gamma \pi^2}) \right),
    \end{align}
    where the convergence of the middle term follows from the monotone
    convergence theorem and the convergence of the first term
    follows from the dominated convergence theorem, which
    ensures that $h(s(\cdot) \conv u_y(\cdot))$ is continuous
    in $y$ for $y > 0$ 
    and hence Riemann integrable (since it is bounded if $y$ is bounded).
    The first equation establishes \cref{eq:KLTfinitek}.
    The antiderivative of $\log x$ is $x \log x - (\log e)x$, so the
    second integral evaluates to $-\log e$, yielding \cref{eq:KLTrateconst}.
\end{proof}
\else
\fi

\section{Neural-Network-Based Compression}
\label{sec:neural}
\begin{figure}
\centering
\includegraphics{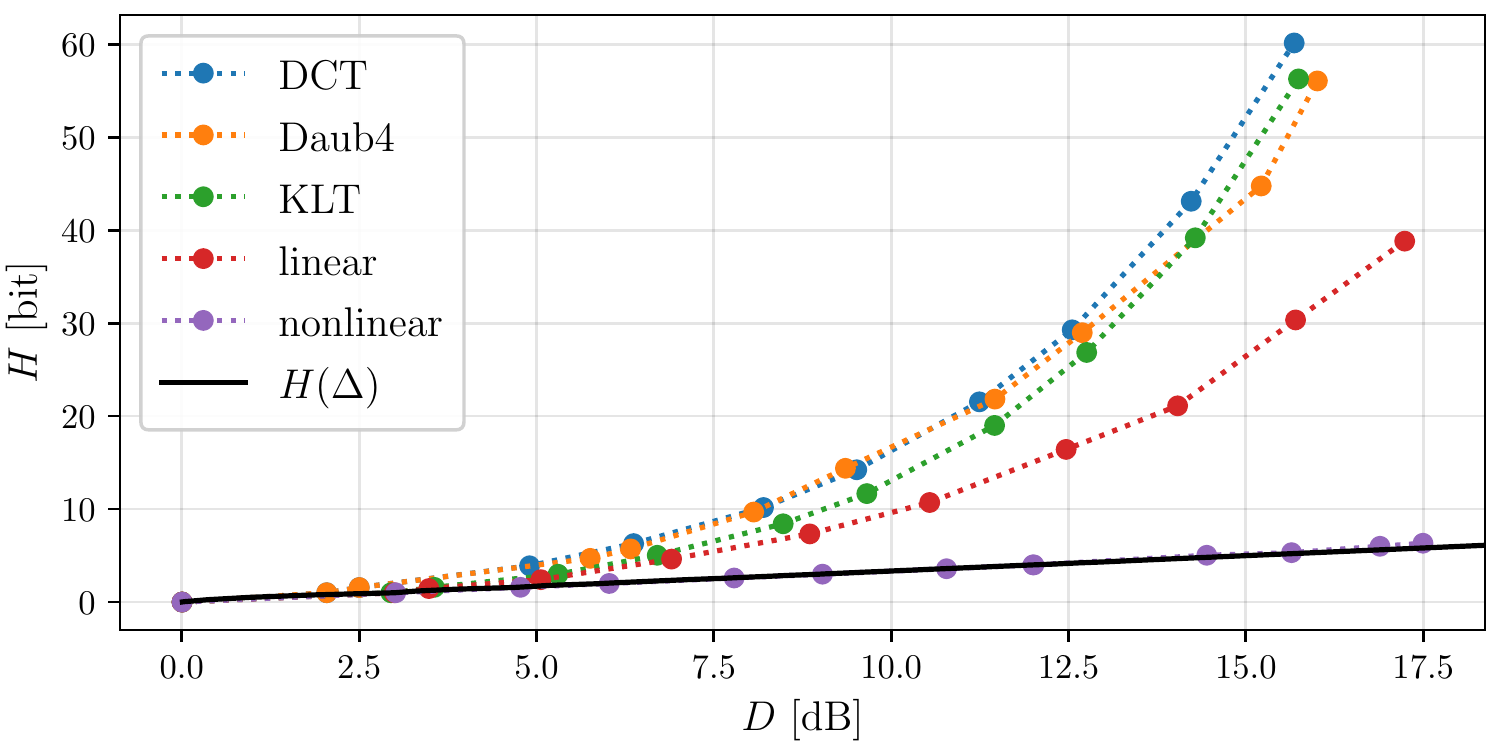}
\includegraphics{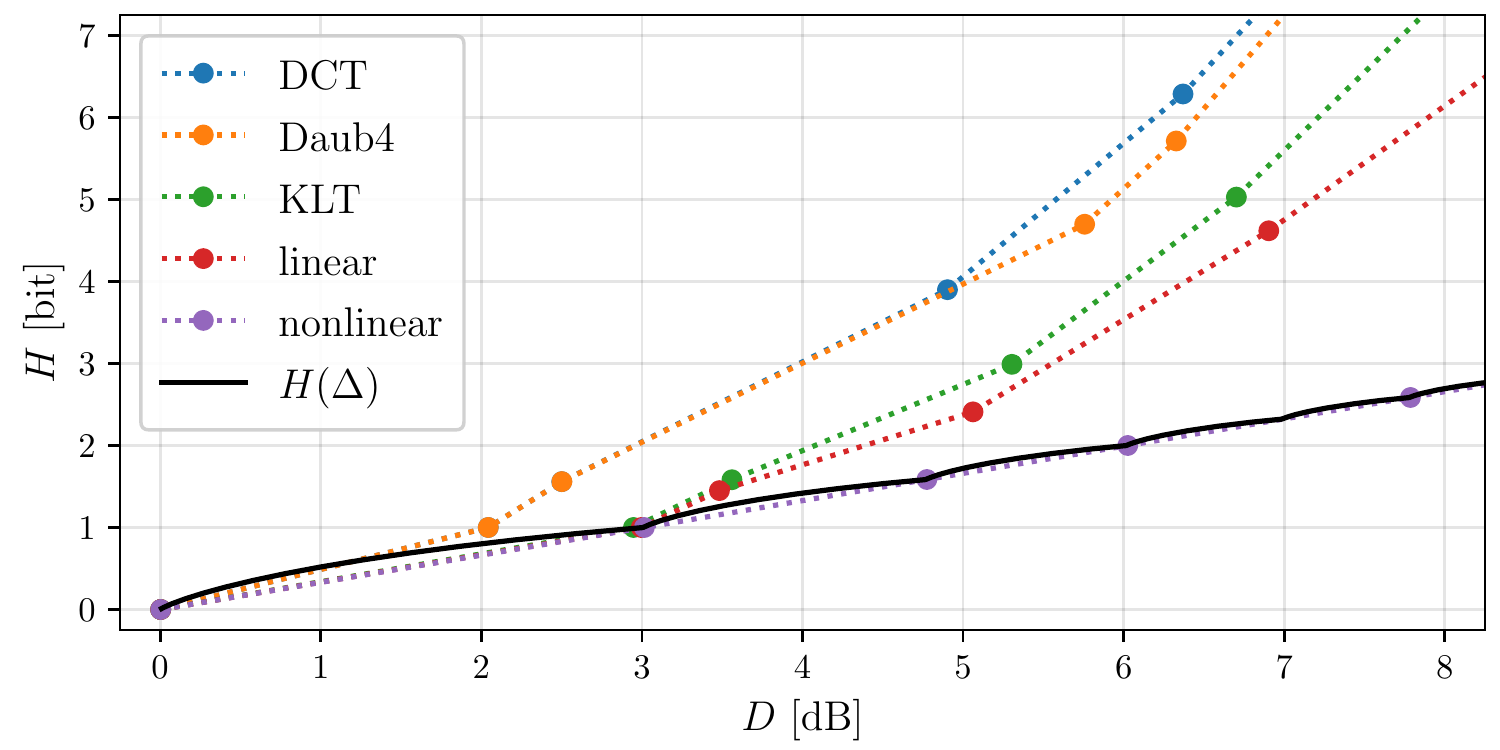}
\caption{Empirical entropy--distortion plots for transform codes constrained to discrete cosine transform (DCT), Daubechies 4-tap wavelet (Daub4), Karhunen--Loève transform (KLT), arbitrary linear transforms, and nonlinear transforms implemented by ANNs. We also plot the entropy--distortion function of the source. The bottom panel shows the same data, zoomed in to the low-rate regime.}
\label{fig:entropy_distortion}
\end{figure}

\begin{figure}
\centering
\includegraphics{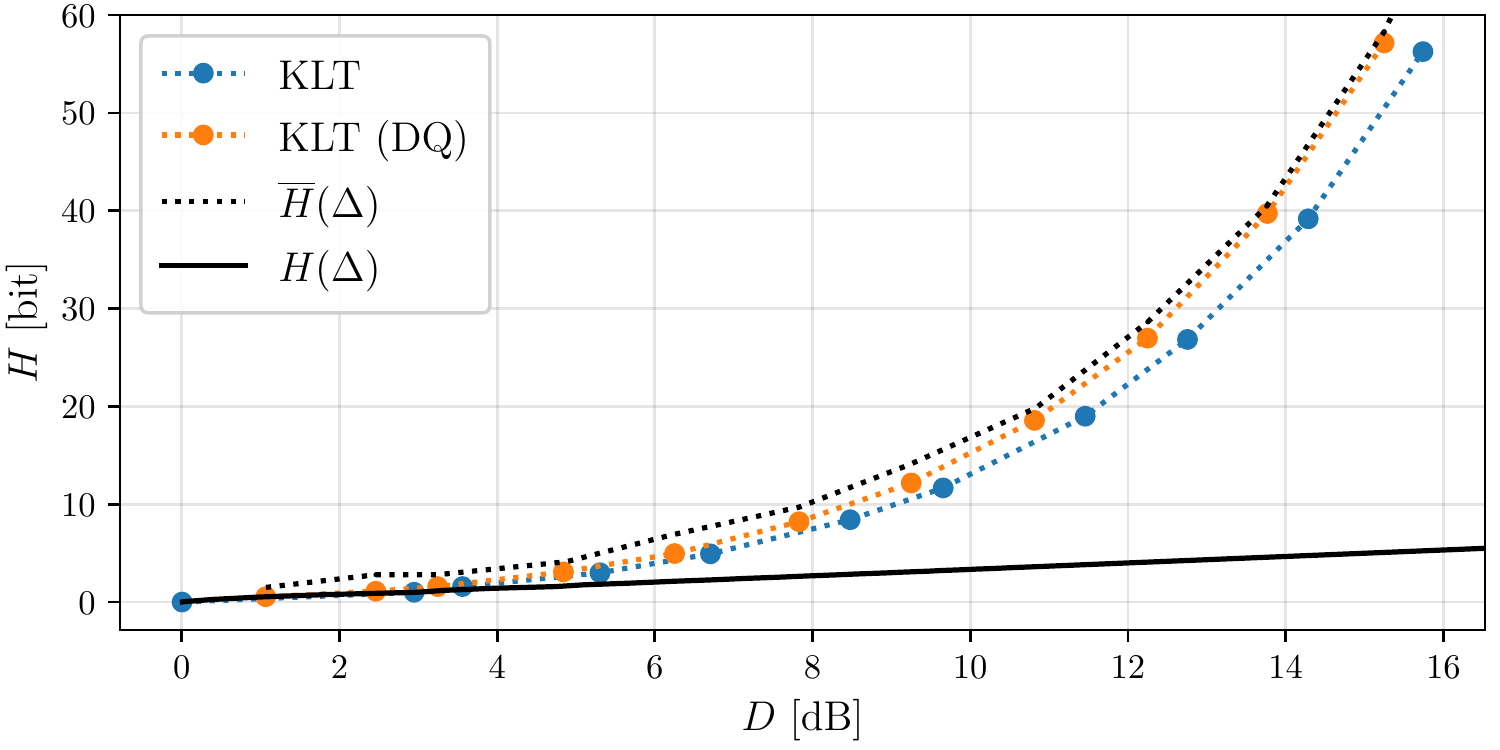}
\caption{Empirical entropy--distortion plot for transform codes constrained to Karhunen--Loève transform (KLT) with/without dithered quantization (DQ). We also plot the entropy--distortion function of the source, and the bound developed in \eqref{eq:KLTrateconst}.}
\label{fig:entropy_distortion_klt}
\end{figure}

We compare the performance of experimentally-trained ANNs against
the optimal entropy-distortion tradeoff in \cref{thm:optimal} and various
linear-transform-based schemes.
We follow the approach summarized by \citet{BaChMiSiJo20}.
To represent the sawbridge digitally, we sample $t$ at 1024 equidistant points between 0 and 1; thus, each realization is represented as a 1024-dimensional vector.
Due to this discretization, only a finite number of realizations are possible, and we took care to keep this number high enough such that none of the transform codes were able to exploit it.
We optimize over three sets of model parameters: the weights and biases of an analysis ($f$) and synthesis ($g$) transform, both of which are represented by ANNs, as well as the parameters of a non-parametric entropy model $p$.  Note that we do not assume that the analysis and synthesis transforms form exact inverses of each other; not only would this be difficult to enforce with ANNs, it is also not necessarily optimal.
We employ ANNs of three layers, with 100 units each (except for the last), and leaky ReLU as an activation function (except for the last), for each of the transforms.  We perform uniform scalar quantization of the transform coefficients, and use an entropy model that assumes independence between each of the latent dimensions; i.e., each transform coefficient is assumed to be encoded separately.
The objective is to minimize the Lagrangian in \cref{eq:Lagrangian}, as in \citet{BaChMiSiJo20}.

To make the comparison with linear transform coding schemes fair, we optimize linear transforms using the same methodology (linear transforms are special cases of ANNs, with just one layer).  For a comparison with specified orthonormal transforms, we express the analysis transform as the composition of a fixed orthonormal matrix with a trainable diagonal scaling matrix (and analogously for the synthesis transform, in reversed order).  The scaling enables uniform quantization with different effective step sizes in each latent dimension.

Empirical results are plotted in \cref{fig:entropy_distortion}.
Each point in the plots represents the outcome of one individual optimization of the Lagrangian in \cref{eq:Lagrangian} with a particular predetermined value of $\lambda$, and with a predetermined constraint on the transforms (as described above).
We spaced $\lambda$ logarithmically in order to cover a wide range of possible trade-offs.
Both $H$ and $D$ are computed as empirical averages over $10^7$ source realizations, where $H$ represents the cross-entropy between the fitted entropy model $p$ and the empirical distribution of the coefficients, and $D$ is mean squared error across $t$.
The top panel in the figure illustrates that the growth rate of the
entropy of the linear transform codes is highly suboptimal, especially when the linear transform is further constrained to coincide with a fixed orthogonal transform such as the DCT, KLT, or a wavelet transform.
In the bottom panel, we observe that the nonlinear transform code is
essentially optimal. Since the transform codes are optimized for the Lagrangian, they settle into the ``kinks'' of the entropy--distortion function, as discussed in \cref{sec:optimal} (note that not all of the kinks are occupied; this is due to the pre-determined spacing of $\lambda$).
As predicted by the theory, we find that a transform code with a linear analysis transform and a nonlinear synthesis transform, optimized with the same method, performs the same as the nonlinear code plotted in the figure (not shown).

Inspection of the latent-space activations as well as the entropy model reveal that linear transforms require successively larger numbers of latent dimensions as the rate increases.
Nonlinear codes, on the other hand, use only a small number of latent dimensions (additional dimensions allowed for by the experimental setup are collapsed to zero-entropy distributions by the optimization procedure), implying that they successfully discover the low-dimensional structure of the source.
In fact, we find that when constraining the latent space to a single dimension, the nonlinear codes do just as well. A simple explanation for why they may choose to use more dimensions in some cases is that an entropy model consisting of a single uniform distribution over a discrete set of states can be factorized into a product of multiple uniform distributions, as long as the number of states is not prime.

To illustrate the optimal KLT-constrained code from \cref{sec:linear}, and to verify that our optimization methodology is valid for linear codes, we plot $\overline H(\Delta)$ from \cref{eq:KLTfinitek} and the optimal tradeoff, $H(\Delta)$, along with empirical results for a KLT-constrained code with and without dithered quantization, in \cref{fig:entropy_distortion_klt}. Note that $\overline H(\Delta)$ and the curve for dithered quantization are quite close, especially at high rates.

\section*{Acknowledgment}
The first author wishes to thank Jingsong Lin for his contributions
to the early phases of this work. He was supported
by the US National Science Foundation under grants CCF-1617673 and
CCF-2008266 and the US Army Research Office under grant W911NF-18-1-0426.

\printbibliography

@article{DoVeDeDa98,
	Author = {Donoho, David L. and Vetterli, Martin and DeVore, R. A. and Daubechies, Ingrid},
	Date-Added = {2020-09-04 14:49:29 -0700},
	Date-Modified = {2020-09-04 14:49:29 -0700},
	Doi = {10.1109/18.720544},
	Journal = {{IEEE} Trans.\ Inf.\ Theory},
	Number = {6},
	Title = {Data Compression and Harmonic Analysis},
	Volume = {44},
	Year = {1998},
	Bdsk-Url-1 = {https://doi.org/10.1109/18.720544}}

@article{BaChMiSiJo20,
  Author = {Ballé, Johannes and Chou, Philip A. and Minnen, David and Singh, Saurabh and Johnston, Nick and Agustsson, Eirikur and Hwang, Sung Jin and Toderici, George},
  Date-Added = {2020-09-01 11:15:41 -0700},
  Date-Modified = {2020-10-26 13:36:42 -0400},
  Doi = {10.1109/JSTSP.2020.3034501},
  Journal = {{IEEE} Trans. on Special Topics in Signal Proc.},
  Note = {to appear},
  Title = {Nonlinear Transform Coding},
  Year = {2020}}

@article{LeLiPiSc93,
  Author = {Leshno, Moshe and Lin, Vladimir Ya. and Pinkus, Allan and Schocken, Shimon},
  Date-Added = {2016-02-01 20:07:42 +0000},
  Date-Modified = {2020-08-29 12:49:50 -0700},
  Doi = {10.1016/S0893-6080(05)80131-5},
  Journal = {Neural Networks},
  Number = {6},
  Title = {Multilayer Feedforward Networks with a Nonpolynomial Activation Function Can Approximate Any Function},
  Volume = {6},
  Year = {1993}}

@inproceedings{HeBaRaSi15,
  Author = {Hénaff, Olivier and Ballé, Johannes and Rabinowitz, Neil C. and Simoncelli, Eero P.},
  Booktitle = {3rd Int. Conf. on Learning Representations ({ICLR})},
  Date-Added = {2020-09-01 11:39:22 -0700},
  Date-Modified = {2020-09-01 11:41:00 -0700},
  Eprint = {1412.6626},
  Eprinttype = {arxiv},
  Title = {The Local Low-Dimensionality of Natural Images},
  Year = {2015}}

@article{HiSa06,
  Author = {Hinton, Geoffrey E. and Salakhutdinov, R. R.},
  Date-Added = {2016-06-21 14:28:14 +0000},
  Date-Modified = {2020-08-29 12:49:50 -0700},
  Doi = {10.1126/science.1127647},
  Journal = {Science},
  Number = {5786},
  Title = {Reducing the Dimensionality of Data with Neural Networks},
  Volume = {313},
  Year = {2006}}

@Article{Szpankowski:NoPrefix,
    Author = {Wojciech Szpankowski and Sergio Verd\'u},
    Doi = {10.1109/TIT.2011.2145590},
    Journal = {{IEEE} Trans.\ on Inf.\ Theory},
    Number = {7},
    Title = {Minimum Expected Length of Fixed-to-Variable
            Lossless Compression Without Prefix Constraints},
    Volume = {57},
    Year = {2011}}

@Article{Gyorgy:Uniform,
    Author = {Andr\'{a}s Gy\"{o}rgy and Tam\'{a}s Linder},
    Journal = {{IEEE} Trans.\ on Inf.\ Theory},
    Number = {7},
    Title = {Optimal Entropy-Constrained Scalar Quantization of a
                Uniform Source},
    Volume = {46},
    Year = {2000},
    Doi =  {10.1109/18.887885},
    Pages = {2704-2711}}

@Book{Rudin:RealAnalysis,
    Author = {Walter Rudin},
    Title = {Principles of Mathematical Analysis},
    Publisher = {McGraw-Hill},
    Year = {1976}}

@Book{Folland:RealAnalysis,
    Author = {Gerald~B.~Folland},
    Title = {Real Analysis},
    Publisher = {Wiley Interscience},
    Edition = {2nd},
    Year = {1999}}

@Book{Wong:SPES,
  author =       "Eugene~Wong and Bruce~Hajek",
  title =        "Stochastic Processes in Engineering Systems",
  publisher =    "Springer-Verlag",
  address =      "New York",
  year =         "1985"}

@Book{Cover:IT2,
   author =       "Cover, Thomas M. and Thomas, Joy A.",
   title =        "Elements of Information Theory",
   publisher =    "John Wiley \& Sons",
   address =      "Hoboken",
   year =         "2006",
   edition =      "2nd",
  }

@Article{Zamir:Universal,
    Author = {Ram Zamir and Meir Feder},
    Journal = {{IEEE} Trans. Inf.\ Theory},
    Number = {3},
    Title = {On Universal Quantization by Randomized Uniform/Lattice  
                   Quantizers},
    Volume = {38},
    Year = {1992},
    Pages = {428-436}}

@Article{Zamir:Bandlimited,
    Author = {Ram Zamir and Meir Feder},
    Journal = {{IEEE} Trans.\ Inf.\ Theory},
    Number = {1},
    Title = {Rate-Distortion  Performance in Coding Bandlimited Sources by 
                   Sampling and Dithered  Quantization},
    Volume = {41},
    Year = {1995},
    Pages = {141-154}}

@Book{Berger:RD,
    author =       "Toby Berger",
    title =        "Rate Distortion Theory: A Mathematical Basis for
                                 Data Compression",
    publisher =    "Prentice Hall",
    address =      "Englewood Cliffs, NJ",
    year =         "1971"}

@Book{Pearlman:Said,
  author =       "William A.~Pearlman and Amir Said",
  title =        "Digital Signal Compression: Principles and Practice",
  publisher =    "Cambridge University Press",
  year =         "2011",
}

@inproceedings{Meyer:Ramp,
  Author = {Yves Meyer},
  Booktitle = {Lecture at CIRM Luminy meeting},
  Title = {Wavelets and Applications},
  Month = mar,
  Year = {1992}}

@misc{WaBa:Sawbridge:Long,
    Author = {Wagner, A. B. and Ballé, Johannes},
    Eprinttype = {arxiv},
    Eprint = {},
    Note = {to appear on arXiv},
    Title = {Neural Networks Optimally Compress the Sawbridge}}

\end{document}